\newcommand{\PseudoCount}{p_{\mathrm{c}}}
\newcommand{\ChildCloseness}{\mathrm{child}_{\mathrm{close}}}
\newcommand{\BestSol}{\mathrm{score}_{\mathrm{max}}}
\newcommand{\DFS}{\mathrm{DFS}}
\begin{document}
\mainmatter

\begin{refsegment}

\title{Fast Quasi-Threshold Editing\thanks{This work was supported by the DFG under grants BR~2158/6-1, WA~654/22-1, and BR~2158/11-1}}

\author{Ulrik~Brandes\inst{1} \and Michael~Hamann\inst{2} \and Ben~Strasser\inst{2} \and Dorothea~Wagner\inst{2}}

\institute{Computer \& Information Science, University of Konstanz, Germany\\
\email{ulrik.brandes@uni-konstanz.de}
\and Faculty of Informatics, Karlsruhe Institute of Technology, Germany\\
\email{\{michael.hamann,strasser,dorothea.wagner\}@kit.edu}
}

\maketitle

\begin{abstract}
We introduce Quasi-Threshold Mover (QTM), an algorithm to solve the quasi-threshold (also called trivially perfect) graph editing problem with edge insertion and deletion.
Given a graph it computes a quasi-threshold graph which is close in terms of edit count. 
This edit problem is NP-hard.
We present an extensive experimental study, in which we show that QTM is the first algorithm that is able to scale to large real-world graphs in practice. 
As a side result we further present a simple linear-time algorithm for the quasi-threshold recognition problem.
\end{abstract}

\section{Introduction}

\begin{wrapfigure}[14]{o}{0.25\textwidth}
  \vspace{-3em}
  \begin{center}
    \includegraphics[scale=1]{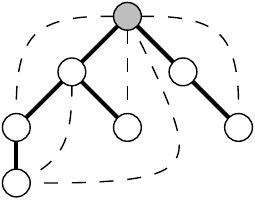}
  \end{center}
  \caption{\label{qt-example}Quasi-thres. graph with thick skeleton, grey root and dashed transitive closure.}
\end{wrapfigure}

Quasi-Threshold graphs, also known as \emph{trivially perfect} graphs, are defined as the $P_4$- and $C_4$-free graphs, i.e., the graphs that do not contain a path or cycle of length~4 as node-induced subgraph~\cite{ycc-q-96}.
They can also be characterized as the transitive closure of rooted forests \cite{w-antcg-65}, as illustrated in Figure~\ref{qt-example}.
These forests can be seen as skeletons of quasi-threshold graphs.
Further a constructive characterization exists: Quasi-threshold graphs are the graphs that are closed under disjoint union and the addition of isolated nodes and nodes connected to every existing node~\cite{ycc-q-96}.

Linear time quasi-threshold recognition algorithms were proposed in~\cite{ycc-q-96} and in~\cite{c-albfs-08}.
Both construct a skeleton if the graph is a quasi-threshold graph.
Further, \cite{c-albfs-08} also finds a $C_4$ or $P_4$ if the graph is no quasi-threshold graph.

Nastos and Gao \cite{ng-f-13} observed that components of quasi-threshold graphs have many features in common with the informally defined notion of communities in social networks.
They propose to find a quasi-threshold graph that is close to a given graph in terms of edge edit distance in order to detect the communities of that graph.
Motivated by their insights we study the quasi-threshold graph editing problem in this paper.
Given a graph $G=(V,E)$ we want to find a quasi-threshold graph $G'=(V,E')$ which is closest to $G$, i.e., we want to minimize the number $k$ of edges in the symmetric difference of $E$ and $E'$.
Figure~\ref{qt-edit-example} illustrates %
\begin{wrapfigure}[15]{o}{0.35\textwidth}
\begin{center}
\includegraphics[scale=1]{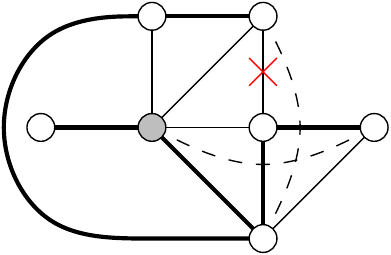}
\end{center}
\caption{\label{qt-edit-example}Edit example with solid input edges, dashed inserted edges, a crossed deleted edge, a thick skeleton with grey root.}%
\end{wrapfigure}
an edit example.
Unfortunately, the quasi-threshold graph editing problem is NP-hard~\cite{ng-f-13}.
However, the problem is fixed parameter tractable (FPT) in $k$ as it is defined using forbidden subgraphs~\cite{c-f-96}.
A basic bounded search tree algorithm which tries every of the 6 possible edits of a forbidden subgraph has a running time in $O(6^k \cdot (|V|+|E|))$.
In \cite{dp-apktp-14} a polynomial kernel of size $O(k^7)$ was introduced.
Unfortunately, our experiments show that real-world social networks have a prohibitively large amount of edits.
We prove lower bounds on real-world graphs for $k$ on the scale of $10^4$ and $10^5$.
A purely FPT-based algorithm with parameter $k$ can thus not scale in practice.
The only heuristic we are aware of was introduced by Nastos and Gao~\cite{ng-f-13} but it examines all $\Theta(|V|^2)$ possible edits in each greedy editing step and thus needs time $\Omega(k\cdot|V|^2)$.
Even though this running time is polynomial it is still prohibitive for large graphs.
In this paper we fill this gap by introducing Quasi-Threshold Mover (QTM), the first scalable quasi-threshold editing algorithm.
The final aim of our research is to determine whether quasi-threshold editing is a useful community detection algorithm.
Designing an algorithm able of solving the quasi-threshold editing problem on large real-world graphs is a first step in this direction.

\subsection{Our Contribution}

Our main contribution is Quasi-Threshold Mover (QTM), a scalable quasi-threshold editing algorithm.
We provide an extensive experimental evaluation on generated as well as a variety of real-world graphs.
We further propose a simplified certifying quasi-threshold recognition algorithm.
QTM works in two phases: An initial skeleton forest is constructed by a variant of our recognition algorithm, and then refined by moving one node at a time to reduce the number of edits required.
The running time of the first phase is dominated by the time needed to count the number of triangles per edge. 
The best current triangle counting algorithms run in $O(|E|\alpha(G))$ \cite{cn-asla-85,ob-tlabd-14} time, where $\alpha(G)$ is the arboricity.
These algorithms are efficient and scalable in practice on the considered graphs.
One round of the second phase needs $O(|V|+|E|\log \Delta)$ time, where $\Delta$ is the maximum degree.
We show that four rounds are enough to achieve good results.

\subsection{Preliminaries}

We consider simple, undirected graphs $G = (V, E)$ with $n = |V|$ nodes and $m = |E|$ edges.
For $v \in V$ let $N(v)$ be the adjacent nodes of $v$.
Let $d(v) := |N(v)|$ for $v \in V$ be the degree of $v$ and $\Delta$ the maximum degree in $G$.
Whenever we consider a skeleton forest, we denote by $p(u)$ the parent of a node $u$.

\section{Lower Bounds}

A lot of previous research has focused on FPT-based algorithms.
To show that no purely FPT-based algorithm parameterized in the number of edits can solve the problem we compute lower bounds on the number of edits required for real-world graphs. 
The lower bounds used by us are far from tight. 
However, the bounds are large enough to show that any algorithm with a running time superpolynomial in $k$ can not scale.

To edit a graph we must destroy all forbidden subgraphs $H$. 
For quasi-threshold editing $H$ is either a $P_4$ or a $C_4$.
This leads to the following basic algorithm: Find forbidden subgraph $H$, increase the lower bound, remove all nodes of $H$, repeat.
This is correct as at least one edit incident to $H$ is necessary. 
If multiple edits are needed then accounting only for one is a lower bound.
We can optimize this algorithm by observing that not all nodes of $H$ have to be removed.
If $H$ is a $P_4$ with the structure $A-B-C-D$ it is enough to remove the two central nodes $B$ and $C$.
If $H$ is a $C_4$ with nodes $A$, $B$, $C$, and $D$ then it is enough to remove two adjacent nodes.
Denote by $B$ and $C$ the removed nodes.
This optimization is correct if at least one edit incident to $B$ or $C$ is needed.
Regardless of whether $H$ is a $P_4$ or a $C_4$ the only edit not incident to $B$ or $C$ is inserting or deleting $\{A, D\}$.
However, this edit only transforms a $P_4$ into a $C_4$ or vice versa.
A subsequent edit incident to $B$ or $C$ is thus necessary.

$H$ can be found using the recognition algorithm.
However, the resulting running time of $O(k(n+m))$ does not scale to the large graphs.
In the appendix we describe a running time optimization to accelerate computations.

\section{Linear Recognition and Initial Editing}
\label{sec:linear_recognition}

The first linear time recognition algorithm for quasi-threshold graphs was proposed in \cite{ycc-q-96}.
In \cite{c-albfs-08}, a linear time certifying recognition algorithm based on lexicographic breadth first search was presented.
However, as the authors note, sorted node partitions and linked lists are needed, which result in large constants behind the big-O.
We simplify their algorithm to only require arrays but still provide negative and positive certificates.
Further we only need to sort the nodes once to iterate over them by decreasing degree.
Our algorithm constructs the forest skeleton of a graph $G$. 
If it succeeds $G$ is a quasi threshold graph and outputs for each node $v$ a parent node $p(v)$.
If it fails it outputs a forbidden subgraph $H$.

To simplify our algorithm we start by adding a super node $r$ to $G$ that is connected to every node and obtain $G'$.
$G$ is a quasi threshold graph if and only if $G'$ is one.
As $G'$ is connected its skeleton is a tree.
A core observation is that higher nodes in the tree must have higher degrees, i.e., $d(v)\le d(p(v))$.
We therefore know that $r$ must be the root of the tree. 
Initially we set $p(u)=r$ for every node $u$.
We process all remaining nodes ordered decreasingly by degree.
Once a node is processed its position in the tree is fixed.
Denote by $u$ the node that should be processed next.
We iterate over all non-processed neighbors $v$ of $u$ and check whether $p(u)=p(v)$ holds and afterwards set $p(v)$ to $u$.
If $p(u)=p(v)$ never fails then $G$ is a quasi-threshold graph as for every node $x$ (except $r$) we have that by construction that the neighborhood of $x$ is a subset of the one of $p(x)$.
If $p(u)\neq p(v)$ holds at some point then a forbidden subgraph $H$ exists.
Either $p(u)$ or $p(v)$ was processed first.
Assume without lose of generality that it was $p(v)$.
We know that no edge $(v, p(u))$ can exist because otherwise $p(u)$ would have assigned itself as parent of $v$ when it was processed.
Further we know that $p(u)$'s degree can not be smaller than $u$'s degree as $p(u)$ was processed before $u$.
As $v$ is a neighbor of $u$ we know that another node $x$ must exist that is a neighbor of $p(u)$ but not of $u$, i.e., $(u, x)$ does not exist.
The subgraph $H$ induced by the 4-chain $v-u-p(u)-x$ is thus a $P_4$ or $C_4$ depending on whether the edge $(v, x)$ exists.
We have that $u\neq r$ as $u$ is processed by the algorithm and $v\neq r$ as its degree is at most $d(u)$.
Further $p(u)\neq r$ as $p(v)$ was processed before $p(u)$ and $x\neq r$ as $r$ is a neighbor of $u$.
$H$ therefore does not use $r$ and is contained in $G$.

\paragraph{From Recognition to Editing.}
\label{sec:linear_editing}

We modify the recognition algorithm to construct a skeleton for arbitrary graphs. 
This skeleton induces a quasi threshold graph $Q$.
We want to minimize $Q$'s distance to $G$.
Note that all edits are performed implicitly, we do not actually modify the input graph for efficiency reasons.
The only difference between our recognition and our editing algorithm is what happens when we process a node $u$ that has a non-processed neighbor $v$ with $p(u)\neq p(v)$.
The recognition algorithm constructs a forbidden subgraph $H$, while the editing algorithm tries to resolve the problem.
We have three options for resolving the problem: we ignore the edge $\{u, v\}$, we set $p(v)$ to $p(u)$, or we set $p(u)$ to $p(v)$.
The last option differs from the first two as it affects all neighbors of $u$.
The first two options are the decision if we want to make $v$ a child of $u$ even though $p(u) \neq p(v)$ or if we want to ignore this potential child.
We start by determining a preliminary set of children by deciding for each non-processed neighbor of~$u$ whether we want to keep or discard it.
These preliminary children elect a new parent by majority.
We set $p(u)$ to this new parent.
Changing~$u$'s parent can change which neighbors are kept.
We therefore reevaluate all the decisions and obtain a final set of children for which we set $u$ as parent.
Then the algorithm simply continues with the next node.

What remains to describe is when our algorithm keeps a potential child.
It does this using two edge measures: The number of triangles $t(e)$ in which an edge $e$ participates and a pseudo-$C_4$-$P_4$-counter $\PseudoCount(e)$, which is the sum of the number of $C_4$ in which $e$ participates and the number of $P_4$ in which $e$ participates as central edge.
Computing $\PseudoCount(x,y)$ is easy given the number of triangles and the degrees of $x$ and $y$ as  $\PseudoCount(\{x, y\}) = (d(x) - 1 - t(\{x, y\}))\cdot(d(y) - 1 - t(\{x, y\}))$ holds.
Having a high $\PseudoCount(e)$ makes it likely that $e$ should be deleted.
We keep a potential child only if two conditions hold.
The first is based on triangles.
We know by construction that both $u$ and $v$ have many edges in $G$ towards their current ancestors.
Keeping $v$ is thus only useful if $u$ and $v$ share a large number of ancestors as otherwise the number of induced edits is too high.
Each common ancestor of $u$ and $v$ results in a triangle involving the edge $\{u,v\}$ in $Q$.
Many of these triangles should also be contained in $G$.
We therefore count the triangles of $\{u,v\}$ in $G$ and check whether there are at least as many triangles as $v$ has ancestors.
The other condition uses $\PseudoCount(e)$. 
The decision whether we keep $v$ is in essence the question of whether $\{u, v\}$ or $\{v, p(v)\}$ should be in $Q$.
We only keep $v$ if $\PseudoCount(\{u, v\})$ is not higher than $\PseudoCount(\{v, p(v)\})$.
The details of the algorithm can be found in the appendix.
The time complexity of this heuristic editing algorithm is dominated by the triangle counting algorithm as the rest is linear.

\section{The Quasi-Threshold Mover Algorithm}
\label{sec:local_moving}

\begin{figure}[tb]
\centering
\begin{subfigure}[b]{0.68\textwidth}%
\begin{algorithm}[H]
\ForEach{$v_m$-neighbor $u$}{
	push $u$\;
}
\While{queue not empty}{
	$u\leftarrow$ pop\;
	determine $\ChildCloseness(u)$ by DFS\;
	$x\leftarrow \max$ over $\BestSol$ of reported $u$-children\;
	$y\leftarrow \sum$ over $\ChildCloseness$ of close $u$-children\;
	
	\uIf{$u$ is $v_m$-neighbor}{
		$\BestSol(u)\leftarrow \max\{x,y\} + 1$\;
	}
	\Else{
		$\BestSol(u)\leftarrow \max\{x,y\} - 1$\;
	}
	\If{$\ChildCloseness(u)>0$ or $\BestSol(u)>0$}{
		report $u$ to $p(u)$\;
		push $p(u)$\;
	}
}
Best $v_m$-parent corresponds to $\BestSol(r)$\;
\end{algorithm}%
\caption{Pseudo-Code for moving $v_m$}%
\label{fig:moving-v-m-pseudo-code}
\end{subfigure}%
~ 
\begin{subfigure}[b]{0.28\textwidth}
	\centering
	\includegraphics[width=2.2cm]{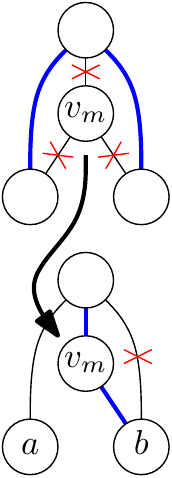}
	\caption{ Moving $v_m$ example. }
	\label{fig:move-v-m}
\end{subfigure}

\caption{In Figure \ref{fig:move-v-m} the drawn edges are in the skeleton. Crossed edges are removed while thick blue edges are inserted by moving $v_m$. $a$ is not adopted while $b$ is.}
\end{figure}

Our algorithm iteratively increases the quality of a skeleton $T$ using an algorithm based on local moving.
Local moving is a successful technique that is employed in many heuristic community detection algorithms \cite{bgll-f-08,gkw-edcgc-14,rn-m-11}.
As in most algorithm based on this principle, our algorithm works in rounds.
In each round it iterates over all nodes $v_m$ in random order and tries to move $v_m$.
In the context of community detection, a node is moved to a neighboring community such that a certain objective function is increased.
In our setting we want to minimize the number of edits needed to transform the input graph $G$ into the quasi-threshold graph $Q$ implicitly defined by $T$.
We need to define the set of allowed moves for $v_m$ in our setting.
Moving $v_m$ consists of moving $v_m$ to a different position within $T$ and is illustrated in Figure~\ref{fig:move-v-m}.
We need to chose a new parent $u$ for $v_m$.
The new parent of $v_m$'s old children is $v_m$'s old parent.
Besides choosing the new parent $u$ we select a set of children of $u$ that are \emph{adopted} by $v_m$, i.e., their new parent becomes $v_m$.
Among all allowed moves for $v_m$ we chose the move that reduces the number of edits as much as possible.
Doing this in sub-quadratic running time is difficult as $v_m$ might be moved anywhere in $G$.
By only considering the neighbors of $v_m$ in $G$ and a few more nodes per neighbor in a bottom-up scan in the skeleton, our algorithm has a running time in $O(n+m \log \Delta)$ per round.
While our algorithm is not guaranteed to be optimal as a whole we can prove that for each node $v_m$ we choose a move that reduces the number of edits as much as possible.
Our experiments show that given the result of the initialization heuristic our moving algorithm performs well in practice.
They further show that in practice four rounds are good enough which results in a near-linear total running time.

\paragraph{Basic Idea.}

Our algorithm starts by isolating $v_m$, i.e., removing all incident edges in $Q$.
It then finds a position at which $v_m$ should be inserted in $T$.
If $v_m$'s original position was optimal then it will find this position again.
For simplicity we will assume again that we add a virtual root $r$ that is connected to all nodes.
Isolating $v_m$ thus means that we move $v_m$ below the root $r$ and do not adopt any children.
Choosing $u$ as parent of $v_m$ requires $Q$ to contain edges from all ancestors of $u$ to $v_m$.
Further if $v_m$ adopts a child $w$ of $u$ then $Q$ must have an edge from every descendant of $w$ to $v_m$.
How good a move is depends on how many of these edges already exist in $G$ and how many edges incident to $v_m$ in $G$ are not covered.
To simplify notation we will refer to the nodes incident to $v_m$ in $G$ as \emph{$v_m$-neighbors}.
We start by identifying which children a node should adopt.
For this we define the \emph{child closeness} $\ChildCloseness(u)$ of $u$ as the number of $v_m$-neighbors in the subtree of $u$ minus the non-$v_m$-neighbors.
A node $u$ is a \emph{close child} if $\ChildCloseness(u)>0$.
If $v_m$ chooses a node $u$ as new parent then it should adopt all close children.
A node can only be a close child if it is a neighbor of $v_m$ or when it has a close child.
Our algorithm starts by computing all close children and their closeness using many short DFS searches in a bottom up fashion.
Knowing which nodes are good children we can identify which nodes are good parents for $v_m$.
A potential parent must have a close child or must be a neighbor of $v_m$.
Using the set of close children we can easily derive a set of parent candidates and an optimal selection of adopted children for every potential parent.
We need to determine the candidate with the fewest edits.
We do this in a bottom-up fashion.%
To implement the described moving algorithm we need to put $O(d_G(v_m))$ elements into a priority queue. 
The running time is thus amortized $O(d_G(v_m) \log d_G(v_m))$ per move or $O(n+m \log \Delta)$ per round.
We start many small searches and analyze their running time complexity using tokens. 
Initially only the $v_m$-neighbors have tokens.
A search consumes a token per step.
The details of the analysis are complex and are described in the appendix.

\paragraph{Close Children.}

To find all close children we attach to each node $u$ a DFS instance that explores the subtree of $u$.
Note that every DFS instance has a constant state size and thus the memory consumption is still linear.
$u$ is close if this DFS finds more $v_m$-neighbors than non-$v_m$-neighbors.
Unfortunately we can not fully run all these searches as this requires too much running time.
Therefore a DFS is aborted if it finds more non-$v_m$-neighbors than $v_m$-neighbors.
We exploit that close children are $v_m$-neighbors or have themselves close children.
Initially we fill a queue of potential close children with the neighbors of $v_m$ and when a new close child is found we add its parent to the queue.
Let $u$ denote the current node removed from the queue.
We run $u$'s DFS and if it explores the whole subtree then $u$ is a close child.
We need to take special care that every node is visited only by one DFS.
A DFS therefore looks at the states of the DFS of the nodes it visits. 
If one of these other DFS has run then it uses their state information to skip the already explored part of the subtree.
To avoid that a DFS is run after its state was inspected we organize the queue as priority queue ordered by tree depth.
If the DFS of $u$ starts by first inspecting the wrong children then it can get stuck because it would see the $v_m$-neighbors too late.
The DFS must first visit the close children of $u$.
To assure that $u$ knows which children are close every close child must report itself to its parent when it is detected.
As all children have a greater depth they are detected before the DFS of their parent starts.

\paragraph{Potential Parents.}

Suppose we consider the subtree $T_u$ of $u$ and $w$ is a potential parent in $T_u$. 
Consider the set of nodes $X_w$ given by the ancestors of $w$ and the descendants of all close children of $w$.
$X_w$ includes $w$ and its close children themselves.
Moving $v_m$ below $w$ requires us to insert an edge from $v_m$ to every non-$v_m$-neighbor in $X_w$.
We therefore want $X_w$ to maximize the number of $v_m$-neighbors minus the number of non-$v_m$-neighbors.
This value gives us a score for each potential parent in $T_u$.
We denote by $\BestSol(u)$ the maximum score over all potential parents in $T_u$.
Note that $\BestSol(u)$ is always at least -1 as we can move $v_m$ below $u$ and not adopt any children.
We determine in a bottom-up fashion all $\BestSol(u)$ that are greater than 0.
Whether $\BestSol(u)$ is -1 or 0 is irrelevant because isolating $v_m$ is never worse.
The final solution will be in $\BestSol(r)$ of the root $r$ as its ``subtree'' encompasses the whole graph.
$\BestSol(u)$ can be computed recursively.
If $u$ is a best parent then the value of $\BestSol(u)$ is the sum over the closenesses of all of $u$'s close children $\pm 1$.
If the subtree $T_w$ of a child $w$ of $u$ contains a best parent then $\BestSol(u)=\BestSol(w)\pm 1$.
The $\pm 1$ depends on whether $w$ is a $v_m$-neighbor.
Unfortunately not only potential parents $u$ have a $\BestSol(u)>0$. 
However, we know that every node $u$ with $\BestSol(u)>0$ is a $v_m$-neighbor or has a child $w$ with $\BestSol(w)>0$.
We can therefore process all $\BestSol$ values in a similar bottom-up way using a tree-depth ordered priority queue as we used to compute $\ChildCloseness$.
As both bottom-up procedures have the same structure we can interweave them as optimization and use only a single queue.
The algorithm is illustrated in Figure~\ref{fig:moving-v-m-pseudo-code} in pseudo-code form.

\section{Experimental Evaluation}
\label{sec:exp}

We evaluated the QTM algorithm on the small instances used by Nastos and Gao~\cite{ng-f-13}, on larger generated graphs and large real-world social networks and web graphs.
We measured both the number of edits needed and the required running time.
For each graph we also report the lower bound $b$ of necessary edits that we obtained using our lower bound algorithm.
We implemented the algorithms in C++ using NetworKit~\cite{ssm-nkait-14}.
All experiments were performed on an Intel Core i7-2600K CPU with 32GB RAM.
We ran all algorithms ten times with ten different random node id permutations.

\paragraph{Comparison with Nastos and Gao's Results.}

Nastos and Gao~\cite{ng-f-13} did not report any running times, we therefore re-implemented their algorithm.
Our implementation of their algorithm has a complexity of $O(m^2 + k \cdot n^2 \cdot m)$, the details can be found in the appendix.
Similar to their implementation we used a simple exact bounded search tree (BST) algorithm for the last 10 edits.
In Table~\ref{tab:resultsng} we report the minimum and average number of edits over ten runs.
Our implementation of their algorithm never needs more edits than they reported\footnote{Except on Karate, where they report 20 due to a typo. They also need 21 edits.}.
Often our implementation needs slightly less edits due to different tie-breaking rules.

For all but one graph QTM is at least as good as the algorithm of Nastos and Gao in terms of edits.
QTM needs only one more edit than Nastos and Gao for the grass\_web graph.
The QTM algorithm is much faster than their algorithm, it needs at most 2.5 milliseconds while the heuristic of Nastos and Gao needs up to 6 seconds without bounded search tree and almost 17 seconds with bounded search tree.
The number of iterations necessary is at most 5. As the last round only checks whether we are finished four iterations would be enough.

\begin{table}[htb]
\centering
\caption{Comparison of QTM and \cite{ng-f-13}. We report $n$ and $m$, the lower bound $b$, the number of edits (as minimum, mean and standard deviation), the mean and maximum of number of QTM iterations, and running times in ms.}
\label{tab:resultsng}
\begin{tabular}{lr@{~~}r@{~~}r@{~~}l@{~}|r@{~~}r@{~~}r@{~}|rr@{~}|r@{~~}r}
      Name &      $n$ &      $m$ & $b$ &  Algorithm & \multicolumn{3}{c|}{Edits}  & \multicolumn{2}{c|}{Iterations} & \multicolumn{2}{c}{Time [ms]} \\
           &       &       &             &                             &   min &  mean &  std &              mean &  max &      mean &    std \\
\midrule
  \multirow{3}{*}{dolphins} &  \multirow{3}{*}{62} & \multirow{3}{*}{159} &   \multirow{3}{*}{24} &                       QTM &  72 &  74.1 &  1.1 &               2.7 &  4.0 &      0.6 &     0.1 \\
  &  & & &                   NG w/ BST &  73&  74.7 &  0.9 &                 - &    - &  15\,594.0  & 2\,019.0 \\
  &  & & &                  NG w/o BST &  73&  74.8 &  0.8 &                 - &    - &    301.3  &  4.0  \\[0.3em]
  \multirow{3}{*}{football} & \multirow{3}{*}{115} & \multirow{3}{*}{613} &  \multirow{3}{*}{52} &                       QTM & 251 & 254.3 &  2.7 &               3.5 &  4.0 &    2.5 &   0.4  \\
  & & & &                   NG w/ BST & 255& 255.0 &  0.0 &                 - &    - & 16\,623.3   & 3\,640.6 \\
  & & & &                  NG w/o BST & 255& 255.0 &  0.0 &                 - &    - &  6\,234.6    & 37.7  \\[0.3em]
 \multirow{3}{*}{grass\_web} &  \multirow{3}{*}{86} & \multirow{3}{*}{113} &        \multirow{3}{*}{10} &                       QTM &  35 &  35.2 &  0.4 &               2.0 &  2.0 &      0.5 &     0.1 \\
 &  & & &                   NG w/ BST &  34&  34.6 &  0.5 &                 - &    - &  13\,020.0  & 3\,909.8 \\
 &  & & &                  NG w/o BST &  38&  38.0 &  0.0 &                 - &    - &     184.6  &  1.2  \\[0.3em]
  \multirow{3}{*}{karate} &  \multirow{3}{*}{34} &  \multirow{3}{*}{78} &         \multirow{3}{*}{8} &                       QTM &  21 &  21.2 &  0.4 &               2.0 &  2.0 &      0.4 &     0.1 \\
 &  & & &                   NG w/ BST &  21&  21.0 &  0.0 &                 - &    - &   9\,676.6  &  607.4 \\
 &  & & &                  NG w/o BST &  21&  21.0 &  0.0 &                 - &    - &    28.1   &    0.3 \\[0.3em]
    \multirow{3}{*}{lesmis} &  \multirow{3}{*}{77} & \multirow{3}{*}{254} &        \multirow{3}{*}{13} &                       QTM &  60 &  60.5 &  0.5 &               3.3 &  5.0 &      1.4 & 0.3 \\
  & & & &                   NG w/ BST &  60&  60.8 &  1.0 &                 - &    - &  16\,919.1  & 3\,487.7 \\
  & & & &                  NG w/o BST &  60&  77.1 & 32.4 &                 - &    - &   625.0   &  226.4 \\
\end{tabular}
\end{table}

\paragraph{Large Graphs.}

\begin{table}[htbp]
\caption{Results for large real-world and generated graphs. Number of nodes $n$ and edges $m$, the lower bound $b$ and the number of edits are reported in thousands. Column ``I'' indicates whether we start with a trivial skeleton or not. $\bullet$ indicates an initial skeleton as described in Section~\ref{sec:linear_editing} and $\circ$ indicates a trivial skeleton. Edits and running time are reported for a maximum number of 0 (respectively 1 for a trivial initial skeleton), 4 and $\infty$ iterations. For the latter, the number of actually needed iterations is reported as ``It''. Edits, iterations and running time are the average over the ten runs.}
\label{tab:real-world}
\centering
\begin{tabular}{l@{~}|@{~}lr@{~~}r@{~~}l@{~}|@{~}r@{~~}r@{~~}r@{~}|r@{~}|@{~}r@{~~}r@{~~}r}
& Name & $n$ {[}K{]} & $b$ {[}K{]} & I & \multicolumn{3}{c|}{Edits {[}K{]}} & It & \multicolumn{3}{c}{Time {[}s{]}}\tabularnewline
& & $m$ {[}K{]} &  &  & 0/1 & 4 & $\infty$ & $\infty$ & 0/1 & 4 & $\infty$\tabularnewline
 \midrule
\multirow{14}{*}{\begin{sideways}\hspace{-1em}
Social Networks
\end{sideways}} &
\multirow{2}{*}{Caltech} & 0.77 & \multirow{2}{*}{0.35} & $\bullet$ & 15.8 & 11.6 & 11.6 & 8.5 & 0.0 & 0.0 & 0.1\tabularnewline
 & & 16.66 &  & $\circ$ & 12.6 & 11.7 & 11.6 & 9.4 & 0.0 & 0.0 & 0.1\tabularnewline[0.3em]
 & \multirow{2}{*}{amazon} & 335 & \multirow{2}{*}{99.4} & $\bullet$ & 495 & 392 & 392 & 7.2 & 0.3 & 5.5 & 9.3\tabularnewline
 & & 926 &  & $\circ$ & 433 & 403 & 403 & 8.9 & 1.3 & 4.9 & 10.7\tabularnewline[0.3em]
 & \multirow{2}{*}{dblp} & 317 & \multirow{2}{*}{53.7} & $\bullet$ & 478 & 415 & 415 & 7.2 & 0.4 & 5.8 & 9.9\tabularnewline
 & & 1\,050 &  & $\circ$ & 444 & 424 & 423 & 9.0 & 1.4 & 5.2 & 11.5\tabularnewline[0.3em]
 & \multirow{2}{*}{Penn} & 41.6 & \multirow{2}{*}{19.9} & $\bullet$ & 1\,499 & 1\,129 & 1\,127 & 14.4 & 0.6 & 4.2 & 13.5\tabularnewline
 & & 1\,362 &  & $\circ$ & 1\,174 & 1\,133 & 1\,129 & 16.2 & 1.0 & 3.7 & 14.4\tabularnewline[0.3em]
 & \multirow{2}{*}{youtube} & 1\,135 & \multirow{2}{*}{139} & $\bullet$ & 2\,169 & 1\,961 & 1\,961 & 9.8 & 1.4 & 31.3 & 73.6\tabularnewline
 & & 2\,988 &  & $\circ$ & 2\,007 & 1\,983 & 1\,983 & 10.0 & 7.1 & 28.9 & 72.7\tabularnewline[0.3em]
 & \multirow{2}{*}{lj} & 3\,998 & \multirow{2}{*}{1\,335} & $\bullet$ & 32\,451 & 25\,607 & 25\,577 & 18.8 & 23.5 & 241.9 & 1\,036.0\tabularnewline
 & & 34\,681 &  & $\circ$ & 26\,794 & 25\,803 & 25\,749 & 19.9 & 58.3 & 225.9 & 1\,101.3\tabularnewline[0.3em]
 & \multirow{2}{*}{orkut} & 3\,072 & \multirow{2}{*}{1\,480} & $\bullet$ & 133\,086 & 103\,426 & 103\,278 & 24.2 & 115.2 & 866.4 & 4\,601.3\tabularnewline
 &  & 117\,185 &  & $\circ$ & 106\,367 & 103\,786 & 103\,507 & 30.2 & 187.9 & 738.4 & 5\,538.5\tabularnewline
 \midrule
 \multirow{8}{*}{\begin{sideways}\hspace{-1em}
Web Graphs
\end{sideways}} &
 \multirow{2}{*}{cnr-2000} & 326 & \multirow{2}{*}{48.7} & $\bullet$ & 1\,028 & 409 & 407 & 11.2 & 0.8 & 12.8 & 33.8\tabularnewline
 & & 2\,739 &  & $\circ$ & 502 & 410 & 409 & 10.7 & 3.2 & 11.8 & 30.8\tabularnewline[0.3em]
 & \multirow{2}{*}{in-2004} & 1\,383 & \multirow{2}{*}{195} & $\bullet$ & 2\,700 & 1\,402 & 1\,401 & 11.0 & 7.9 & 72.4 & 182.3\tabularnewline
 & & 13\,591 &  & $\circ$ & 1\,909 & 1\,392 & 1\,389 & 13.5 & 16.6 & 65.0 & 217.6\tabularnewline[0.3em]
 & \multirow{2}{*}{eu-2005} & 863 & \multirow{2}{*}{229} & $\bullet$ & 7\,613 & 3\,917 & 3\,906 & 13.7 & 6.9 & 90.7 & 287.7\tabularnewline
 & & 16\,139 &  & $\circ$ & 4\,690 & 3\,919 & 3\,910 & 14.5 & 22.6 & 85.6 & 303.5\tabularnewline[0.3em]
 & \multirow{2}{*}{uk-2002} & 18\,520 & \multirow{2}{*}{2\,966} & $\bullet$ & 68\,969 & 31\,218 & 31\,178 & 19.1 & 200.6 & 1\,638.0 & 6\,875.5\tabularnewline
 & & 261\,787 &  & $\circ$ & 42\,193 & 31\,092 & 31\,042 & 22.3 & 399.8 & 1\,609.6 & 8\,651.8\tabularnewline
 \midrule
 \multirow{4}{*}{\begin{sideways}\hspace{-0.5em}Generated\end{sideways}} & Gen. & 100 & \multirow{2}{*}{42} & $\bullet$ & 200 & 158 & 158 & 4.6 & 0.2 & 3.5 & 4.1\tabularnewline
 & 160K & 930 &  & $\circ$ & 193 & 158 & 158 & 6.1 & 1.0 & 3.3 & 4.9\tabularnewline[0.3em]
 & Gen. & 1\,000 & \multirow{2}{*}{0.391} & $\bullet$ & 1.161 & 0.395 & 0.395 & 3.0 & 3.3 & 43.8 & 43.8\tabularnewline
 & 0.4K & 10\,649 &  & $\circ$ & 182 & 5.52 & 5.52 & 6.1 & 15.9 & 52.9 & 78.8\tabularnewline
\end{tabular}
\end{table}

For the results in Table~\ref{tab:real-world} we used two Facebook graphs \cite{tmp-sf-12} and five SNAP graphs \cite{lk-snapd-14} as social networks and four web graphs from the 10th DIMACS Implementation Challenge \cite{bmsw-gpgcd-13,bcsv-ucasf-04,brsv-llpam-11,bv-twgfi-04}. We evaluate two variants of QTM. The first is the standard variant which starts with a non-trivial skeleton obtained by the heuristic described in Section~\ref{sec:linear_editing}. The second variant starts with a trivial skeleton where every node is a root. We chose these two variants to determine which part of our algorithm has which influence on the final result. For the standard variant we report the number of edits needed before any node is moved. With a trivial skeleton this number is meaningless and thus we report the number of edits after one round. All other measures are straightforward and are explained in the table's caption.

Even though for some of the graphs the mover needs more than 20 iterations to terminate, the results do not change significantly compared to the results after round 4.
In practice we can thus stop after 4 rounds without incurring a significant quality penalty.
It is interesting to see that for the social networks the initialization algorithm sometimes produces a skeleton that induces more than $m$ edits (e.g.\ in the case of the ``Penn'' graph) but still the results are always slightly better than with a trivial initial skeleton.
This is even true when we do not abort moving after 4 rounds.
For the web graphs, the non-trivial initial skeleton does not seem to be useful for some graphs.
It is not only that the initial number of edits is much higher than the finally needed number of edits, also the number of edits needed in the end is slightly higher than if a trivial initial skeleton was used.
This might be explained by the fact that we designed the initialization algorithm with social networks in mind.
Initial skeleton heuristics built specifically for web graphs could perform better.
While the QTM algorithm needs to edit between approximately 50 and 80\% of the edges of the social networks, the edits of the web graphs are only between 10 and 25\% of the edges.
This suggests that quasi-threshold graphs might be a good model for web graphs while for social networks they represent only a core of the graph that is hidden by a lot of noise.
Concerning the running time one can clearly see that QTM is scalable and suitable for large real-world networks.

As we cannot show for our real-world networks that the edit distance that we get is close to the optimum we generated graphs by generating quasi-threshold graphs and applying random edits to these graphs.
The details of the generation process are described in the appendix.
In Table~\ref{tab:real-world} we report the results of two of these graphs with $400$ and $160\,000$ random edits.
In both cases the number of edits the QTM algorithm finds is below or equal to the generated editing distance.
If we start with a trivial skeleton, the resulting edit distance is sometimes very high, as can be seen for the graph with 400 edits.
This shows that the initialization algorithm from Section~\ref{sec:linear_editing} is necessary to achieve good quality on graphs that need only few edits.
As it seems to be beneficial for most graphs and not very bad for the rest, we suggest to use the initialization algorithm for all graphs.

\begin{wrapfigure}[15]{o}{0.4\textwidth}
\includegraphics[width=\linewidth]{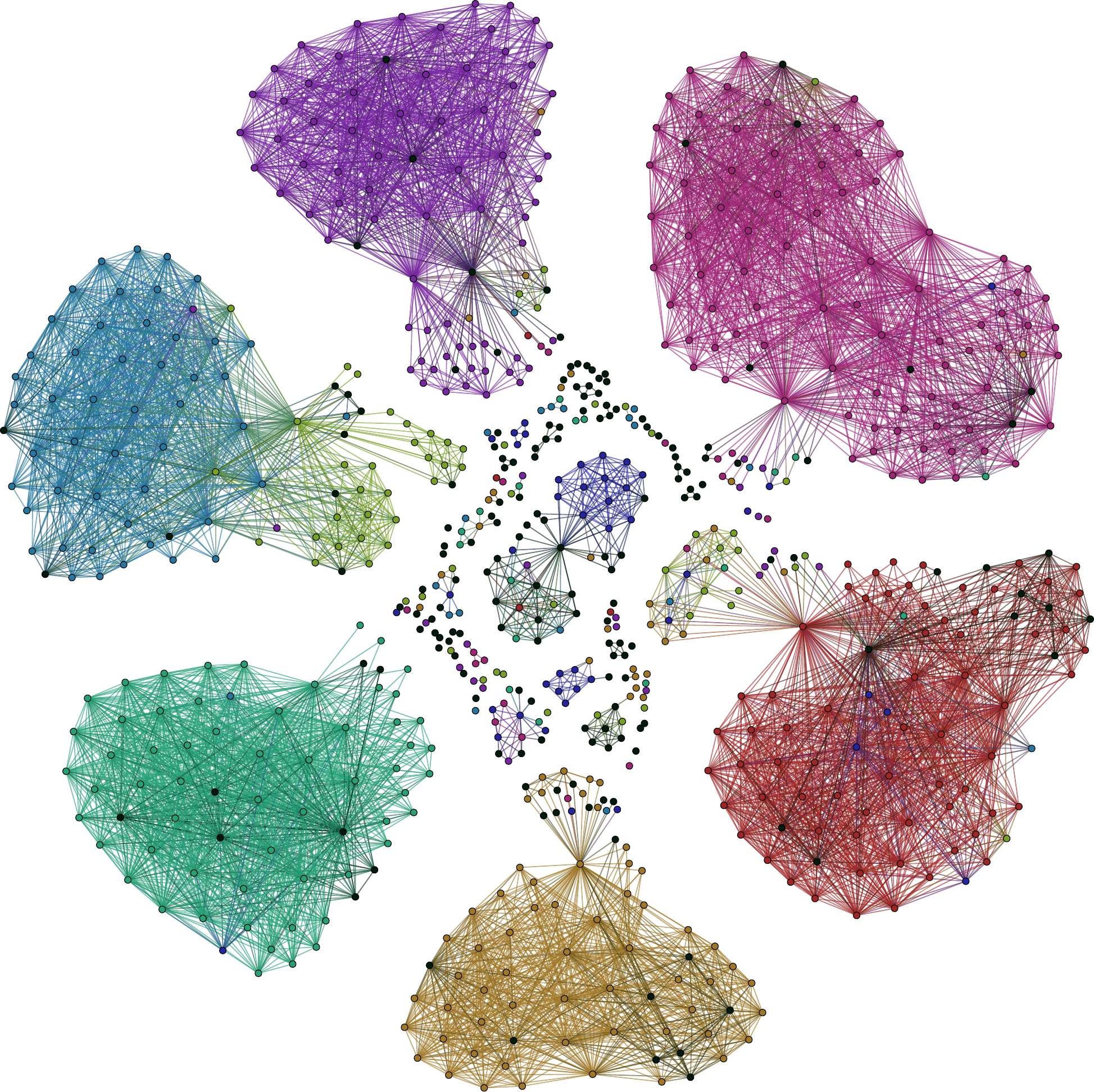}
\caption{Edited Caltech network, edges colored by dormitories of endpoints.}
\label{fig:caltech}
\end{wrapfigure}

\paragraph{Case Study: Caltech.} The main application of our work is community detection.
While a thorough experimental evaluation of its usefulness in this context is future work we want to give a promising outlook.
Figure~\ref{fig:caltech} depicts the edited Caltech university Facebook network from~\cite{tmp-sf-12}.
Nodes are students and edges are Facebook-friendships.
The dormitories of most students are known.
We colored the graph according to this ground-truth.
The picture clearly shows that our algorithm succeeds at identifying most of this structure.

\section{Conclusion}

We have introduced Quasi-Threshold Mover (QTM), the first heuristic algorithm to solve the quasi-threshold editing problem in practice for large graphs.
As a side result we have presented a simple certifying linear-time algorithm for the quasi-threshold recognition problem.
A variant of our recognition algorithm is also used as initialization for the QTM algorithm.
In an extensive experimental study with large real world networks we have shown that it scales very well in practice.
We generated graphs by applying random edits to quasi-threshold graphs.
QTM succeeds on these random graphs and often even finds other quasi-threshold graphs that are closer to the edited graph than the original quasi-threshold graph.
A surprising result is that web graphs are much closer to quasi-threshold graphs than social networks, for which quasi-threshold graphs were introduced as community detection method.
A logical next step is a closer examination of the detected quasi-threshold graphs and the community structure they induce.
Further our QTM algorithm might be adapted for the more restricted problem of threshold editing which is NP-hard as well.\footnote{P{\aa}l Gr{\o}n{\aa}s Drange, personal communication (planned ESA submission)}

\paragraph{\textbf{Acknowledgment:}} We thank James Nastos for helpful discussions.

\printbibliography[segment=1]
\end{refsegment}

\newpage
\begin{refsegment}

\begin{appendix}

\section{Fast Computation of Lower Bounds}

As outlined in the main paper, the idea for computing lower bounds is to find a $C_4$ or $P_4$ and to remove two of the nodes, two neighboring nodes in the case of a $C_4$, the two central nodes in the case of the $P_4$, such that we destroy as few $C_4$ and $P_4$ as possible.
Finding a single $P_4$ or $C_4$ is possible in linear time using the certifying recognition algorithm.
The challenge when designing a fast algorithm for computing lower bounds is that the lower bound can be $n/2$ which results in a quadratic algorithm.

It is enough if we identify all central edges of a $P_4$ as we only want to remove the two nodes that are incident to that edge anyway.
For a $C_4$ it is also enough if we can identify any edge it is part of as we also just want to remove the two incident nodes.
Therefore it is enough if we can quickly find an edge that is part of a $C_4$ or a central edge of a $P_4$.

If we consider the neighbors of the two nodes $u$ and $v$ and want to find a $C_4$ or a $P_4$ where $\{u, v\}$ is the central edge, we only need to find two nodes $x \in N(u) \setminus \{v\}$ and $y \in N(v) \setminus \{u\}$ such that $x \notin N(v)$ and $y \notin N(u)$.
Common neighbors of $u$ and $v$ thus cannot be chosen for $x$ and $y$, however all other neighbors besides $u$ and $v$ can be chosen.
Therefore we know that such two nodes exist whenever $\PseudoCount(\{u, v\}) =   (d(u) - 1 - t(\{u, v\}))\cdot(d(v) - 1 - t(\{u, v\})) > 0$.
The algorithm we choose is based on this observation.
Initially, we count the triangles per edge for all edges.
Then we iterate over all nodes and for each node $u$ we choose a neighbor $v$ such that $\PseudoCount(\{u, v\}) > 0$ and remove $u$ and $v$.
After removing $u$ and $v$ we update the triangle counters accordingly.

In order to destroy not too many $P_4$ and $C_4$, we sort the nodes initially be degree in ascending order.
We also choose the neighbor $v$ of $u$ such that the degree of $v$ is minimal.
Note that the initial iteration order does not necessarily reflect the degree order anymore after removing some of the nodes.

Given a graph structure that allows removing a node $u$ in amortized time $O(d(u))$ the whole algorithm can actually be implemented in time $O(\alpha(G)m)$ with $O(\alpha(G)m)$ memory consumption.
The running time $O(\alpha(G)m)$ comes from triangle listing \cite{cn-asla-85}.
The main idea is that we store for each edge the pairs of edges which form a triangle.
Whenever we delete a node, we check for each edge for all stored pairs if the two other edges still exist, and if yes, decrease their counter.
As we delete each edge only once this gives a total running time of $O(\alpha(G)m)$.

However in practice we found that the required amount of storage was too high for our compute servers.
Even if we have ``just'' 40 triangles per edge on average (for example the web graph ``eu-2005'' from the 10th DIMACS Implementation Challenge \cite{bmsw-gpgcd-13}) storing these triangles means that we need a lot more memory than for just storing $G$.

Therefore we used the trivial update algorithm that, for deleting the edge $\{u, v\}$, enumerates all triangles the edge is part of and updates the counters accordingly.
For deleting all edges this gives a $O(m \cdot \Delta)$ algorithm which only needs $O(m)$ memory.
In practice this was still fast enough for the graphs we considered.
In Table~\ref{tab:bounds} we report the lower bound and the running time of the lower bound calculation for the large real-world graphs we considered (refer to the experimental evaluation, Section~\ref{sec:exp} for details concerning the graphs).
The graphs are sorted by the number of edges $m$.
As in the experimental evaluation, we executed all experiments ten times with different random node id permutations.
Only for the largest graph, uk-2002, we used only one run with the original node ids for the lower bound calculation due to memory constraints.
In this table we report average and maximum bound and running time while in the experimental evaluation section we only reported the maximum.
However, as one can see, the average and the maximum do not differ significantly.
The running times clearly show that the running time does not only depend on $m$ but also on the degrees, i.e.\ graphs with a lower number of nodes but a comparable number of edges have a higher running time.

\begin{table}
\caption{Results for the lower bounds of the large real-world graphs we considered}
\label{tab:bounds}
\centering
\begin{tabular}{l@{~~}r@{~~}r@{~}|@{~}r@{~~}r@{~}|@{~}r@{~~}r}
        Name &          $n$ &           $m$ & \multicolumn{2}{c|@{~}}{Lower Bound} &            \multicolumn{2}{c}{Time [s]}       \\
             &              &        &        mean &      max &                 mean &  max \\
\midrule
   Caltech36 &      769 &     16\,656 &       349.5 &     350 &                  0.0 &   0.0 \\
  com-amazon &   334\,863 &    925\,872 &     99\,305.9 &   99\,413 &                  0.6 &   0.6 \\
    com-dblp &   317\,080 &   1\,049\,866 &     53\,656.9 &   53\,680 &                  0.7 &   0.7 \\
      Penn94 &    41\,554 &   1\,362\,229 &     19\,918.7 &   19\,920 &                  1.8 &   1.8 \\
    cnr-2000 &   325\,557 &   2\,738\,969 &     48\,500.0 &   48\,739 &                 22.6 &  23.8 \\
 com-youtube &  1\,134\,890 &   2\,987\,624 &    139\,006.5 &  139\,077 &                  9.7 &   9.8 \\
     in-2004 &  1\,382\,908 &  13\,591\,473 &    194\,849.9 &  195\,206 &                 70 &  70.4 \\
     eu-2005 &   862\,664 &  16\,138\,468 &    228\,457.1 &  228\,759 &                187.2 & 188.1 \\
      com-lj &  3\,997\,962 &  34\,681\,189 &   1\,334\,663.3 & 1\,334\,770 &                 65.8 &  66.3 \\
   com-orkut &  3\,072\,441 & 117\,185\,083 &   1\,479\,977.2 & 1\,480\,007 &                394.2 & 395.4 \\
     uk-2002 & 18\,520\,486 & 261\,787\,258 &        & 2\,966\,359 &                  & 960.8 \\
\end{tabular}
\end{table}

\section{Details of the Initialization Algorithm}

In Algorithm~\ref{alg:full_init} we provide the full initialization heuristic as pseudo code.
Note that while for the parent calculation we use $\le$ for comparisons we use $<$ for the final selection of the neighbors to keep in order to not to wrongly assign too many neighbors to $u$.

\begin{algorithm}
	\KwIn{$G = (V, E)$}
	\KwOut{Parent assignment $p$ for each node}

	Sort $V$ by degree in descending order using bucket sort\;
	$p : V \rightarrow V \cup \{ \emptyset \}, u \mapsto \emptyset $\;

	Count triangles $t(\{u, v\})$\;

	\ForEach{$u \in V$}{
		\tcp*[h]{Process node $u$}

		$N \gets \{v \in N(u) \,|\, v $ not processed and $ p(u) = p(v)$ or ($\PseudoCount(\{u, v\}) \le  \PseudoCount(\{v, p(v)\})$ and $\mathrm{depth}(v) \le t(\{u, v\}) + 1)\}$\;
		$p_n \gets$ the most frequent value of $p(x)$ for $x \in N$\;

		\If{$p_n \neq p(u)$}{
			$p(u) \gets p_n$\;
			$\mathrm{depth}(u) \gets 0$\;
			$\PseudoCount(\{u, p_n\}) \gets \infty$\;
		}

		\ForEach{$v \in N(u)$ that has not been processed}{
			\If{$p(u) = p(v)$ or $(\PseudoCount(\{u, v\}) <  \PseudoCount(\{v, p(v)\})$ and $\mathrm{depth}(v) < t(\{u, v\}) + 1)$}{
				$p(v) \gets u$\;
				$\mathrm{depth}(v) \gets \mathrm{depth}(v) + 1$\;
			}
		}
	}
	\caption{The Initialization Algorithm}
	\label{alg:full_init}
\end{algorithm}

\section{The Quasi-Threshold Mover in Detail}

Here we want to describe the quasi-threshold mover algorithm in more detail.
Apart from giving more details how we actually implemented the algorithm in order to achieve the claimed running time we will also give proofs for its correctness and running time.

The QTM algorithm iteratively modifies the forest that defines a quasi-threshold graph.
In the following we assume again that our forest has a virtual root $r$ that is connected to all nodes in the original graph, i.e.\ we consider only the case of a tree.

For a single node $v_m$ the algorithm solves the following problem optimally:

Find a parent $u$ in the forest and a set of close children $C$ of that parent $u$ such that inserting $v_m$ as child of $u$ and moving $C$ to be children of $v_m$ minimizes the number of edits among all choices of $u$ and $C$.

One iteration of the algorithm consists of solving this problem for every node of the graph.
We will show later that for a single node $v_m$ this is possible in time $O(d(v_m) \log(d(v_m)))$ time amortized  over an iteration, so the time for a whole iteration is in $O(n + m \log (\Delta))$.
For $t > 0$ iterations the total running time is $O(t \cdot (n + m \log (\Delta)))$.

The main idea why this works is that we do not need to consider all possible parents but only those parents which are adjacent to $v_m$ or which have a close child, i.e.\ a child of which more than half of the descendants are adjacent to $v_m$.
Otherwise the existing edges do not compensate for the missing edges and we could as well add $v_m$ as child of $r$, i.e.\ delete all edges in the original graph that are incident to $v_m$.
We will show how to determine these possible parents and close children by visiting only a constant number of nodes for each neighbor of $v_m$ that are determined by populating counters from the bottom to the top of the tree.

\begin{algorithm}[tb]
	\caption{Core algorithm of QTM: finding a new parent and children to be adopted.}
	\label{alg:local_moving}
	\CommentSty{// Assumption: $v_m$ is not in the tree}\;
	Insert neighbors of $v_m$ in the queue\; \nllabel{lm:pqinit}

	\While{Queue is not empty}{
		$u \gets$ pop node from queue\;
		mark $u$ as touched\; \nllabel{lm:touched}

		\lIf{$\ChildCloseness(u) > \BestSol(u)$}{$\BestSol(u) \gets \ChildCloseness(u)$}\nllabel{lm:gupdate}

		\If{$u$ is marked as neighbor}{$\ChildCloseness(u) \gets \ChildCloseness(u) + 2$, $\BestSol(u) \gets \BestSol(u) + 2$\;} \nllabel{lm:neighborupdate}

		$\ChildCloseness(u) \gets \ChildCloseness(u) - 1$,
		$\BestSol(u) \gets \BestSol(u) - 1$\;\nllabel{lm:tokendec}

		\If(\CommentSty{// Start a DFS from $u$}){$\ChildCloseness(u) \ge 0$
and $u$ has children}{ \nllabel{lm:dfsstart}
			$x \gets $ first child of $u$\;
			\While{$x \neq u$}{\nllabel{lm:dfswhile}
				\eIf{$x$ not touched or $\ChildCloseness(x) < 0$}{
					$\ChildCloseness(u) \gets \ChildCloseness(u) - 1$\;
					$x \gets \DFS(x)$\;
					\If{$\ChildCloseness(u) <
0$}{\nllabel{lm:dfsend}
						$\DFS(u) \gets x$\;
						break\;
					}
					$x \gets$ next node in DFS
order after $x$ below $u$\;
				}{
					$x \gets$ next node in DFS order
after the subtree of $x$ below $u$\;\nllabel{lm:dfsskip}
				}
			}
		}

		\If(\CommentSty{// Propagate information to parent}){$u
\neq r$}{
			\If{$\ChildCloseness(u) > 0$}{
				$\ChildCloseness(p(u)) \gets \ChildCloseness(p(u)) + \ChildCloseness(u)$\;
				Insert $p(u)$ in queue\;\nllabel{lm:einsertpq}
			}

			\If{$\BestSol(u) > \BestSol(p(u))$}{
				$\BestSol(p(u)) \gets \BestSol(u)$\;
				Insert $p(u)$ in queue\;
			}
		}
	}
\end{algorithm}

In one iteration, the QTM algorithm simply iterates over all nodes in a random order.
For each node $v_m$ we search the optimal parent and children.
Algorithm~\ref{alg:local_moving} contains the pseudo code for the main part of this search.

In order to avoid complicated special cases we first remove $v_m$ from the tree\footnote{This is equivalent to isolating $v_m$ by inserting $v_m$ below the virtual root $r$.}
In the end we want to move $v_m$ back to its initial position if no better position was found.
If the initial position of $v_m$ was the best position, then the algorithm will find it again.
However, if there are multiple positions in the skeleton that induce the same number of edits, the algorithm will find any of these positions.
In order to make sure that the algorithm terminates even if we do not limit the number of iterations we store the initial position of $v_m$, i.e.\ its children and its parent.
We also count the number of edits that were necessary among its neighbors.
If no improvement was possible, we move the node back to this initial position in the end.

For a single node $v_m$ that shall possibly be moved we will process its neighbors and possibly $O(d(v_m))$ other nodes ordered by decreasing depth.
We maintain the list of these nodes in a priority queue that is initialized with the neighbors of $v_m$ and sorted by depth.
As we do not want to dynamically determine the depth of a node we calculate the depth initially and update it whenever we remove or insert a node in the forest.
A marker is set for all neighbors of $v_m$ in order to make it possible to determine in constant time if a node is adjacent to $v_m$.

When we process a node $u$ of the queue, we first determine if $u$ is the best parent in the subtree of $u$, then we possibly visit some nodes below $u$ using a special DFS in order to determine the child closeness of $u$ and possibly insert its parent into the queue.
We will later explain the details of the DFS.

We store the score of the best solution in the subtree of $u$ in $\BestSol(u)$ and the child closeness of $u$ in $\ChildCloseness(u)$. In order to avoid special cases we initialize $\BestSol(u)$ with $-1$ and $\ChildCloseness(u)$ with $0$.
Furthermore we store at each node $u$ the state of the DFS that has possibly been started at $u$.
In order to store the state we only store the last visited node.
We store this node in $\DFS(u)$.
We initialize $\DFS(u)$ with $u$.

At the end, we can find the number of edits that can be saved over isolating $v_m$ in $\BestSol(r)$ and we can also additionally track which parent lead to that score.
As already mentioned, we compare this to the number of edits at the old position of $v_m$ and move $v_m$ back to the old position if no improvement was possible.
If an improvement is possible, we insert $v_m$ below the parent that we identified as best parent $u$.
The missing part are the children that shall be moved from $u$ to $v_m$.
We can determine them by visiting all previously visited nodes (we can store them) and check for each visited node $c$ if it is a close child of $u$, i.e.\ if attaching it to $v_m$ would save edits which we have stored in $\ChildCloseness(c)$.

\paragraph{Proof of Correctness}

In this section we want to give a formal proof why the local moving algorithm is correct, i.e.\ always selects the best parent and the best selection of children.
We do this by giving exact definitions of all used variables and proofing their correctness.

We begin with the child closeness $\ChildCloseness(u)$ which is the number of edits that we can save if $v_m$ is attached below $u$:

\begin{proposition}
\label{prop:e_correct}
Either $\ChildCloseness(u)$ is the number of neighbors of $v_m$ in the subtree of $u$ minus the
number of non-neighbors, or there are more non-neighbors than neighbors in the
subtree of $u$. In the latter case, if $u$ has been processed, then $\ChildCloseness(u) = -1$.
More precisely if $u$ has been processed, $\ChildCloseness(u)$ is the number of existing neighbors minus the number of missing neighbors of all nodes in DFS order between $u$ and $D(u)$ and additionally all subtrees of children $c$ with $\ChildCloseness(c) > 0$ that are not in the DFS order between $u$ and $\DFS(u)$.
\end{proposition}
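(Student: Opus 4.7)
The plan is to prove the proposition by strong induction on the order in which nodes are popped from the priority queue. Because the queue is keyed by decreasing tree depth, every descendant of $u$ that is ever enqueued has already been processed by the time $u$ is popped. The induction hypothesis asserts that for every already-processed descendant $d$ of $u$, either $\ChildCloseness(d)$ equals the number of $v_m$-neighbors minus the number of non-$v_m$-neighbors in the subtree rooted at $d$, or $\ChildCloseness(d) = -1$ and the latter is strictly larger. A preliminary lemma, provable by the same induction, states that $u$ is eventually enqueued if and only if its subtree contains a $v_m$-neighbor or a close descendant, so no relevant contribution is ever missed.

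First I would trace $\ChildCloseness(u)$ through the phases of processing. At the moment $u$ is popped, $\ChildCloseness(u)$ equals the sum of $\ChildCloseness(c)$ taken over all close children $c$ of $u$, since those are the only previous writes to $\ChildCloseness(u)$. By the induction hypothesis each such term is exactly $N_c - M_c$, the net count for $c$'s subtree. The neighbor update and subsequent decrement then contribute $+1$ if $u$ is a $v_m$-neighbor and $-1$ otherwise, accounting precisely for $u$ itself.

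Next I would analyse the DFS phase. The DFS descends into exactly those children whose contribution has not yet been aggregated, namely the non-close children (close-but-not-yet-processed children cannot exist by the depth ordering). For each newly visited node the DFS modifies $\ChildCloseness(u)$ by $+1$ or $-1$ depending on whether that node is a $v_m$-neighbor, and whenever it encounters a previously touched close child it skips the entire subtree using the stored $\DFS(\cdot)$ state, since that subtree's net count has already been added. If the DFS finishes without the counter becoming negative, the final value of $\ChildCloseness(u)$ equals $N_u - M_u$, establishing the first alternative of the proposition. Otherwise the DFS aborts the first time $\ChildCloseness(u)$ would become negative, which yields $\ChildCloseness(u) = -1$ and stores the current frontier in $\DFS(u)$; this matches the ``more precisely'' clause, with the counter equal to the partial sum over the nodes in DFS order between $u$ and $\DFS(u)$ plus the closenesses of the close children whose subtrees were skipped.

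The main obstacle will be the bookkeeping for the aborted case: showing that every node in the subtree of $u$ is accounted for exactly once across the three sources (propagated close-child closenesses, DFS-visited prefix up to $\DFS(u)$, and the unexplored suffix). This requires a case split per child $c$ of $u$ on whether $c$ is a close child that has already reported, an untouched non-close subtree entered by the current DFS, or a previously touched subtree whose skip branch is taken via $\DFS(c)$. A second subtlety is that a later DFS started at an ancestor $u'$ may resume the DFS started at $u$; I would verify, as part of the same induction, that the state $\DFS(u)$ is consistent with the partial sum already booked into $\ChildCloseness(u)$, so that resumption continues to count each remaining node exactly once.
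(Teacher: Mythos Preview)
Your overall inductive framework matches the paper's: structural induction on the tree, which by the depth-ordered priority queue is the same as strong induction on processing order, together with a preliminary argument that every node $u$ for which $\ChildCloseness(u)\ge 0$ should hold is eventually processed.

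The genuine gap is in your description of the DFS phase. The DFS in Algorithm~\ref{alg:local_moving} never increments $\ChildCloseness(u)$; it only ever executes $\ChildCloseness(u)\leftarrow\ChildCloseness(u)-1$. Every $v_m$-neighbor in the subtree of $u$ has strictly greater depth and has therefore already been processed before $u$ is popped, so no unvisited node the DFS reaches can be a $v_m$-neighbor. All positive contributions have already arrived at $\ChildCloseness(u)$ via the close-child propagation before the DFS starts; the DFS only collects the missing negative contributions. Consequently your sentence ``modifies $\ChildCloseness(u)$ by $+1$ or $-1$ depending on whether that node is a $v_m$-neighbor'' misdescribes the algorithm, and an argument built on that reading would not go through.

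You also conflate two distinct skip mechanisms. When the DFS visits a node $x$ there are three cases, not two: (i) $x$ is touched with $\ChildCloseness(x)\ge 0$: skip the entire subtree of $x$ by going to the next sibling (the subtree's net has either been propagated, if positive, or is zero); (ii) $x$ is untouched: then $x$ is necessarily a non-neighbor of $v_m$, so decrement by $1$ and continue the DFS into $x$; (iii) $x$ is touched with $\ChildCloseness(x)<0$: by induction $\ChildCloseness(x)=-1$ is exactly the net over the nodes from $x$ to $\DFS(x)$ (plus the already-counted close subtrees below $x$), so decrement by $1$ and jump directly past $\DFS(x)$. Your phrase ``skips the entire subtree using the stored $\DFS(\cdot)$ state'' mixes cases (i) and (iii). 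Once this three-way case split is made explicit, the ``counted exactly once'' bookkeeping you correctly identify as the main obstacle falls out directly, and the resumption of an aborted DFS by an ancestor is handled by case (iii) without a separate argument.
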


\begin{proof}
We will give the proof by structural induction.

As first step we want to establish that all nodes where $\ChildCloseness(u) \ge 0$ are processed.
As $\ChildCloseness(u) < 0$ if there are no neighbors of $v_m$ in the subtree of $u$ only neighbors of $v_m$ and their ancestors can have $\ChildCloseness(u) \ge 0$.
All neighbors of $v_m$ are processed (line~\ref{lm:pqinit}).
For non-neighbors $u$, $\ChildCloseness(u) \ge 0$ means that one of their children $c$ is close, i.e.\ has $\ChildCloseness(c) > 0$.
As in this case $c$ inserts $v_m$ into the queue (line~\ref{lm:einsertpq}) also in this case $u$ will be processed.
As we process all nodes by descending depth (only parents, i.e.\ nodes of smaller depth, are inserted in the queue) we can assume that if we are at a node $u$, all descendants of $u$ that need to be processed have been processed and that when the algorithm terminates all nodes $u$ with $\ChildCloseness(u) \ge 0$ have been processed.

In line~\ref{lm:neighborupdate} and \ref{lm:tokendec} $\ChildCloseness(u)$ is updated such that the proposition is true if we consider only $u$ itself.
This means that the proposition is true for leafs which is also the initial step of our induction.

As the claim is true for all children of $u$ we can also assume that all children $c$ with $\ChildCloseness(c) > 0$ already updated $\ChildCloseness(u)$ accordingly, i.e.\ $\ChildCloseness(u)$ already correctly considers $u$ and the values of all children with $\ChildCloseness(c) > -1$.

If we have $\ChildCloseness(u) = -1$ in line~\ref{lm:dfsstart} $\ChildCloseness(u)$ must have been 0 initially as in the following it can only be decreased by 1 at maximum.
Therefore we are in the situation that $u$ is no neighbor of $v_m$ and $u$ has no close children, i.e.\ children with $\ChildCloseness(c) > 0$.
In this situation this is already the final result as if this result was incorrect, i.e.\ $\ChildCloseness(u) > -1$, then there must be at least as many neighbors as non-neighbors of $v_m$ among the nodes in the subtree of $u$.
As $u$ is no neighbor of $v_m$ the descendants must contain at least one more neighbor than there are non-neighbors among them and this must also be true for at least one of the children of $u$.
Therefore $\ChildCloseness(c) > 0$ for this child which is contradiction to the situation that there are no children $c$ with $\ChildCloseness(c) > 0$.

So now we only need to consider the case that $\ChildCloseness(u) > -1$ in line~\ref{lm:dfsstart} which means that the algorithm will start a DFS.

As first step we want to have a closer look at the DFS that is executed in the algorithm.
If we say in the following that the DFS ``visits'' a node we mean that it is the value of $x$ in line~\ref{lm:dfswhile}.

On visiting certain nodes, we decrease $\ChildCloseness(u)$.
If $\ChildCloseness(u) < 0$, we stop the DFS and store the last visited node in $\DFS(u)$.
This means that at the end of a DFS either $\ChildCloseness(u) \ge 0$ or $\DFS(u)$ points to the last visited node.
Whenever we visit a node, there are three possible cases:

\begin{enumerate}
\item The easiest case is that $c$ has been processed and $\ChildCloseness(c) > -1$. In this
case the edits of $c$ are already considered by the parent of $c$ and we do
not need to deal with it. Furthermore, we know that $\ChildCloseness(c)$ is the correct number
of neighbors minus non-neighbors of the whole subtree of $c$ so it is correct
that the algorithm skips these nodes in line~\ref{lm:dfsskip}.

\item If $c$ has not been processed yet, it is a not a neighbor of $v$
(otherwise it would have been processed). We decrease $\ChildCloseness(u)$ which is correct as
this is a missing neighbor. Then we can continue the DFS. The same is true if
$c$ has been processed, $\ChildCloseness(c) < 0$ but $\DFS(c) = c$, i.e.\ no DFS has been
executed.

\item If $\ChildCloseness(c) < 0$ and $\DFS(c) \neq c$ we know from the induction
hypothesis that $\ChildCloseness(c) = -1$ and that this is exactly the number of existing
minus the number of missing neighbors from $c$ up to $\DFS(c)$ in DFS order (including $\DFS(c)$)
plus the number of children $c'$ of $c$ with $\ChildCloseness(c') > -1$ which we ignore anyway
in the DFS.
We decrease $\ChildCloseness(u)$ which correctly considers the nodes between $c$ and $\DFS(c)$ in DFS order (both included).
Then we jump to $\DFS(c)$ and do not visit $\DFS(c)$ but the next node in DFS order which is obviously correct as $\DFS(c)$ has already been considered by decreasing $\ChildCloseness(u)$.
\end{enumerate}

When the DFS ends, either we have now considered all edits of the descendants
of $u$ or the DFS ended with $\ChildCloseness(u) < 0$ and we have stored the location of the
last visited node in $\DFS(u)$. In the latter case, all nodes up to this point
have been considered as we have outlined before. Therefore the claim is now
also true for $u$. \qed

\end{proof}

If we want to know for a potential parent $u$ how many edits we can save by moving some of its children to $v_m$ this is the sum of $\ChildCloseness(c)$ for all close children of $u$, i.e.\ children with $\ChildCloseness(c) > 0$.
This is the value that we store in $\ChildCloseness(u)$ before $u$ is processed by setting $\ChildCloseness(p(c))$ for all close children $c$ of $u$, i.e.\ children $c$ with $\ChildCloseness(c) > 0$.
Obviously, this is positive if a node $u$ has at least one close child.

In order to not to need to evaluate all nodes as potential parents we make use of the following observation:

\begin{proposition}
\label{prop:possibleparents}
Only nodes with close children and neighbors of $v_m$ need to be considered as parents of $v_m$.
\end{proposition}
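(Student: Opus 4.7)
The plan is to show that any potential parent $u$ that is neither a $v_m$-neighbor nor has a close child can be replaced by $p(u)$ without decreasing the score; iterating this along the ancestor path reduces every candidate to a node that satisfies the claim or to the virtual root $r$, which always corresponds to isolating $v_m$ and is included among the options by construction.

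First I would formalize the score of placing $v_m$ below a potential parent $w$ as
\[
  \mathrm{Score}(w) \;=\; \sum_{x \in X_w} \sigma(x),
\]
where $\sigma(x) = +1$ if $x \in N(v_m)$ and $-1$ otherwise, and $X_w$ is, as defined in the discussion of potential parents, the ancestors of $w$ (including $w$ itself) together with the descendants of all close children of $w$. In this notation $\ChildCloseness(c) = \sum_{x \in T_c} \sigma(x)$, and the adoption that realises $\mathrm{Score}(w)$ is precisely the adoption of every close child of $w$, since any non-close child contributes at most~$0$.

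The key step is to fix $u \neq r$ with $\sigma(u) = -1$ and no close child and to derive $\mathrm{Score}(p(u)) \ge \mathrm{Score}(u) + 1$. A one-line calculation gives $\ChildCloseness(u) = \sigma(u) + \sum_{c \text{ child of } u} \ChildCloseness(c) \le -1$, so $u$ itself is not close; in particular $u \notin X_{p(u)}$, so the two sets $X_u$ and $X_{p(u)}$ differ cleanly. Writing $A$ for the contribution of the ancestors of $p(u)$, this yields $\mathrm{Score}(u) = A + \sigma(u) = A - 1$, while $\mathrm{Score}(p(u)) \ge A$ because the close-children contribution at $p(u)$ is non-negative.

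Finally I would apply this inequality inductively along the ancestor chain of any offending $u$: every step moves strictly up and weakly improves the score, so the process terminates at a $v_m$-neighbor, a node with a close child, or at $r$. The point I expect to require the most care is the bookkeeping of $X_u$ versus $X_{p(u)}$, and specifically the verification that $u$ is excluded from $X_{p(u)}$; this is exactly what $\ChildCloseness(u) \le -1$ guarantees, which is why the assumption ``no close child'' is essential rather than merely convenient.
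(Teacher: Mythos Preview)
Your argument is correct and is essentially the paper's proof: both hinge on the observation that if $u$ is not a $v_m$-neighbor and has no close child, then $p(u)$ is a strictly better parent, since no child of $u$ should be adopted and the non-edge $\{u,v_m\}$ is saved. The paper phrases this as a one-step contradiction on the optimal parent rather than iterating up the tree, and the ``careful bookkeeping'' you flag about $u\notin X_{p(u)}$ is in fact not needed---the bound $\mathrm{Score}(p(u))\ge A$ follows directly from non-negativity of the close-children contribution at $p(u)$, independently of whether $u$ is among those close children.
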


\begin{proof}
Assume otherwise: The best parent $u$ has no close children and is not a neighbor of $v_m$.
Then attaching children of $u$ to $v_m$ makes no sense as this would only increase the number of needed edits so we can assume that no children will be attached.
However then choosing $p(u)$ as parent of $v_m$ will save one edit as $u$ is no neighbor of $v_m$.
This is a contradiction to the assumption that $u$ is the best parent. \qed
\end{proof}

So far we have only evaluated edits below nodes and identified all possible parents which are also processed as we have established before.
The part that is still missing is the evaluation of the edits above a potential parent $u$.

\begin{theorem}
Consider the subtree $T_u$ of $u$.
Then for the subgraph of $T_u$, $\BestSol(u)$ stores the maximum number of edits from $v_m$ to nodes inside the subgraph of $T_u$ that can be saved by choosing the parent of $v_m$ in $T_u$ instead of isolating $v_m$ or $-1$ if no edits can be saved.
\end{theorem}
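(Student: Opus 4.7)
I would prove this by structural induction on the subtree $T_u$, following the bottom-up order in which the algorithm processes nodes (the priority queue orders by decreasing depth, so all proper descendants of $u$ that need processing are handled before $u$ itself). The inductive hypothesis at $u$ reads: for every proper descendant $c$, $\BestSol(c)$ equals the claimed maximum over $T_c$, and by Proposition~\ref{prop:e_correct}, $\ChildCloseness(c)$ equals the signed count of $v_m$-neighbors minus non-neighbors in $T_c$ whenever $c$ has been processed. Proposition~\ref{prop:possibleparents} will be used to restrict attention to parents that are either $v_m$-neighbors or have at least one close child.

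For the inductive step I would partition candidate parents inside $T_u$ into two classes: (a)~$u$ itself, together with some adoption set, and (b)~candidates strictly inside some child subtree $T_c$. In case (a), the optimal adoption set is exactly the set of close children of $u$, because including a non-close child would introduce more missing-neighbor edits than it saves; so case~(a) yields savings $y + \epsilon$ with $y = \sum_{c \text{ close child of } u} \ChildCloseness(c)$ and $\epsilon = +1$ or $-1$ according to whether $u$ is a $v_m$-neighbor. In case (b), the induction hypothesis gives savings $\BestSol(c)$ inside $T_c$, and adding $u$ as one extra ancestor shifts this by $\epsilon$, producing $x + \epsilon$ with $x = \max_c \BestSol(c)$. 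Hence the optimum over all candidates is $\max\{x, y\} + \epsilon$, which is precisely what lines~\ref{lm:gupdate}, \ref{lm:neighborupdate} and~\ref{lm:tokendec} compute, \emph{provided} $\ChildCloseness(u)$ and $\BestSol(u)$ have been correctly aggregated by the time $u$ is popped.

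The main obstacle will be establishing this aggregation invariant, since propagation fires only on positive values. For $\ChildCloseness$ this is harmless: a child with non-positive $\ChildCloseness$ is exactly a non-close child and by definition contributes nothing to $y$. For $\BestSol$ I would argue that a child $c$ with $\BestSol(c) \le 0$ is dominated by the initialization value $-1$, in the sense that replacing $x$ by $\max\{x,-1\}$ does not change $\max\{x,y\} + \epsilon$, because $y \ge 0$ always (it is a sum of strictly positive terms, or zero when empty) and $\max\{x,y\}$ is insensitive to replacing a non-positive $x$ by $-1$. Finally, for the ``$-1$ if no edits can be saved'' clause I would note that a node never inserted in the queue retains its initialized value $\BestSol(u) = -1$; Proposition~\ref{prop:possibleparents} then certifies that $T_u$ contains no candidate parent capable of achieving positive savings, so $-1$ is indeed correct in that regime. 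Closing the induction at the virtual root $r$ (whose subtree is all of $T$) will give the global optimum used by QTM.
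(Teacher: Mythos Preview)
Your proposal is correct and follows essentially the same structural-induction route as the paper: same case split into ``$u$ itself as parent'' versus ``best parent lies in some child subtree,'' same identification of the two quantities $x=\max_c \BestSol(c)$ and $y=\sum_{c\text{ close}}\ChildCloseness(c)$, and the same $\pm 1$ adjustment depending on whether $u$ is a $v_m$-neighbor.

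One point deserves tightening. In the final clause you invoke Proposition~\ref{prop:possibleparents} to conclude that an unprocessed $u$ has no positive-savings candidate anywhere in $T_u$. That proposition only tells you which nodes can be \emph{the} optimal parent globally; it does not rule out that $T_u$ contains a $v_m$-neighbor deep down (it can). The correct justification is the one the paper spells out: if $u$ is never enqueued then $u\notin N(v_m)$, no child $c$ has $\ChildCloseness(c)>0$, and no child has $\BestSol(c)>\BestSol(u)=-1$; by the induction hypothesis the true optimum in each $T_c$ is therefore $\le 0$, and since $u$ contributes an extra $-1$ as a non-neighbor ancestor, the true optimum over $T_u$ is $\le -1$. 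Equivalently, the paper phrases this as the contrapositive ``$u$ is processed whenever the true $\BestSol(u)$ should exceed $-1$,'' deriving it directly from the propagation rules rather than from Proposition~\ref{prop:possibleparents}.
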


\begin{proof}
The proof is given by structural induction on the tree skeleton.
We start with the initial step which is a node $u$ that is a leaf of the tree.

As $T_u$ only consists of $u$, we have only one edge from $v_m$ to $u$ and therefore only two cases:
$u$ is a neighbor of $v_m$ or not.
In both cases, $\BestSol(u)$ is initialized with $-1$ as there are no children that could propagate any values.
In the second case, $u$ will not be processed but the result is already correct anyway: no edits can be saved by choosing $u$ as parent of $v_m$.
In the first case, as $\ChildCloseness(u)$ is initialized to $0$, $\ChildCloseness(u) > \BestSol(u)$ and therefore $\BestSol(u) \gets 0$ (line~\ref{lm:gupdate}).
We end with $\BestSol(u) = 1$ which is correct, we can save an edit over isolating $u$ as the edge $(u, v_m)$ does not need to be deleted when we chose $u$ as parent of $v_m$.
When we set $\BestSol(u)$ we can also store $u$ as best parent together with $\BestSol(u)$.

Now we can assume that the theorem holds for all children of $u$.

As not all nodes are processed, we need to explain why $u$ is processed at all if $\BestSol(u) > -1$ should hold.
There are two possibilities: Either it could make sense to use $u$ as parent or we could use a node below $u$ as parent.
In the first case by Proposition~\ref{prop:possibleparents} either $u$ is a neighbor of $v_m$ or $u$ has a close child which means that $u$ is processed.
Assume that in the second case $u$ was not processed but it should be $\BestSol(u) > -1$.
Further we can assume that $u \notin N(v_m)$ as otherwise $u$ was processed.
Let $x$ be the best parent in $T_u$ and let $c$ be the direct child of $u$ such that $x$ is in the subtree of $c$ $T_c$ (it is possible that $x = c$).
If it makes sense to use $x$ as parent of $v_m$ then by inserting $v_m$ below $x$ also the edge $\{u, v_m\}$ must be inserted.
This means that in the subtree of $c$ we can save one more edit as the edit $\{u, v_m\}$ is not necessary which means that $\BestSol(c) = \BestSol(u) + 1 > 0$.
This means that by induction $c$ must have been processed and $c$ must have propagated $\BestSol(c)$ to $u$ and also inserted $u$ in the queue which is a contradiction to the assumption that $u$ is not processed.

When $u$ is processed, we need to make the decision if $u$ is the best parent in $T_u$ or if we should choose the parent below $u$.
The edit of the edge $\{v_m, u\}$ is needed or not independent of the choice of the parent in $T_u$.
Therefore we do not need to reconsider any decisions that were made below $u$.
If we do not choose $u$, then we need to choose the best parent below $u$, i.e.\ the one of the subtree of the child $c$ with the highest value of $\BestSol(c)$.
This is also what the algorithm does by propagating $\BestSol(c)$ to the parent as maximum of $\BestSol(c)$ and $\BestSol(p(c))$.
Therefore $\BestSol(u)$ is initialized to the best solution below $u$.

If we want to determine how good $u$ is as parent, we need to look at the closeness of its children.
More specifically, we can save as many edits as the sum of $\ChildCloseness(c)$ for all close children $c$ of $u$.
This is the value to which $\ChildCloseness(u)$ is initialized by its close children, therefore we only need to compare $\BestSol(u)$ to $\ChildCloseness(u)$.
Therefore it is correct to set $\BestSol(u)$ to $\ChildCloseness(u)$ if $\ChildCloseness(u)$ is larger than $\BestSol(u)$.
After this initial decision, we increase or decrease $\BestSol(u)$ by one depending on whether the edge $\{u, v_m\}$ exists or not, this is obviously correct.
\qed
\end{proof}

As $T_r$ is the whole graph, $\BestSol(r)$ determines the best solution of the whole graph.
Therefore the QTM algorithm optimally solves the problem of finding a new parent and a set of its children that shall be adopted.

\paragraph{Proof of the Running Time}

After showing the correctness of the algorithm, we will now show that the running time is indeed $O(m \log(\Delta))$ per iteration and amortized $O(d \log(d))$ per node.

During the whole algorithm we maintain a depth value for each node that specifies the depth in the forest at which the node is located.
Whenever we move a node, we update these depth values.
This involves decreasing the depth values of all descendants of the node in its original position and increasing the depth values of all descendants of the node at the new position.
Unfortunately it is not obvious that this is possible in the claimed running time as a node $v_m$ might have more than $O(d(v_m))$ descendants.

Note that a node is adjacent to all its descendants and ancestors in the edited graph.
This means that every ancestor or descendant that is not adjacent to the node causes an insert.
Therefore the node must be neighbor of at least half of the ancestors and children after a move operation as otherwise the less than half of the degree deletes are cheaper than the more than half of the degree inserts.
This means that updating the depth values at the destination is possible in $O(d)$ time.

For the update of the values in the original position we need a different, more complicated argument.
First of all we assume that initially the total number of edits never exceeds the number of edges as otherwise we could simply delete all edges and get less edits.
For amortizing the number of needed edits of nodes that have more descendants and ancestors than their degree we give each node tokens for all their neighbors in the edited graph.
As the number of edits is at most $m$ the number of initially distributed tokens is in $O(m)$.
Whenever we move a node $v_m$, it generates tokens for all its new neighbors and itself, i.e. in total at most $2 \cdot d(v_m)$ tokens.
Therefore a node has always a token for each of its ancestors and descendants and can use that token to account for updating the depth of its previous descendants.
In each round only $O(m)$ tokens are generated, therefore updating the depth values of a node is in amortized time $O(d)$ per node and $O(m)$ per iteration.

Using the same argument we can also account for the time that is needed for updating the pointers of each node to its parents and children and for counting the number of initially needed or saved edits.

What we have shown so far means that once we know the best destination we can move a node and update all depth values in time $O(d)$ amortized over an iteration where all nodes are moved.

The remaining claim is that we can determine the new parent and the new children in time $O(d \log(d))$ per node.
More precisely we will show that only $O(d)$ nodes are inserted in the queue and we need amortized constant time for processing a node.
A standard max-heap that needs $O(\log(n))$ time per operation can be used for the implementation of the queue.

All values that are stored per node need to be initialized for the whole iteration.
All nodes whose values are changed, which are exactly the nodes that have been in the queue at some moment, need to be stored so their values can be reset at the end of the processing of a node.

The basic idea of the main proof is that each neighbor of $v_m$ gets four tokens.
This is represented by the fact that we increase $\ChildCloseness(u)$ and $\BestSol(u)$ by 2 for all neighbors $u$ of $v_m$.
When we process a node $u$, one token is consumed if this node is no neighbor of $v_m$, then the DFS consumes tokens of $\ChildCloseness(u)$ and at the end the rest of the tokens are passed to the parent.

Note that all nodes that are processed have $\ChildCloseness(u) > 0$ or $\BestSol(u) > 0$, either initially or after accounting for the fact that they are neighbors of $v_m$.

First of all let us only consider processed nodes $u$ with $\ChildCloseness(u) > 0$ initially or after accounting for the fact that $u$ is a neighbor of $v_m$.
We consume one token for processing this node in line~\ref{lm:tokendec}.
This is for the whole processing of the node apart from the DFS where the accounting is more complicated.
Apart from the DFS only constant work is done per node, so consuming one taken is enough for that.

First of all note that for each visited node in the DFS only a constant amount of work is needed as traversing the tree, i.e.\ possibly traversing a node multiple times can be accounted to the first visit.
Obviously without keeping a stack this needs a tree structure where we can determine the next child $c'$ after a child $c$ of a node $u$ can be determined in constant time.
This can be implemented by storing in node $c$ the position of $c$ in the array (or list) of children in $p(c)$.
This also allows deleting entries in the children list in constant time (in an array deletion can be implemented as swap with the last child).

Whenever we visit a node that has not been touched yet or that has $\ChildCloseness(x) < 0$, we consume one token of $\ChildCloseness(u)$.
When this is not the case, i.e.\ $\ChildCloseness(x) > -1$, the node has been processed already and we account our visiting of $x$ to the processing of $x$.
This is okay as we visit each node only once during a DFS:
After the DFS starting at $u$ has finished, either $\ChildCloseness(u) > -1$ and an upcoming DFS will not descend into the subtree of $u$ anymore or we ended the DFS in line~\ref{lm:dfsend} and thus have set $\DFS(u)$ to the last visited node which means that when we visit $u$ in an upcoming DFS, this DFS will directly jump to $\DFS(u)$ after visiting $u$.

Note that by decreasing $\ChildCloseness(u)$ to $-1$ we actually consume one more token than we had.
However for this we only need a constant amount of work which can be accounted for by the processing time of $u$.

Now we consider nodes $u$ that are processed with $\BestSol(u) > 0$ initially.
If we ignore line~\ref{lm:gupdate} everything seems to be simple: we consume one token and pass the rest to the parent (using the maximum instead of the sum) if a token is left.
However if we set $\BestSol(u)$ to $\ChildCloseness(u)$ we are getting new tokens out of nowhere.
Fortunately it turns out we can explain that these tokens are also from the $\BestSol(c)$ of all children $c$ of $u$ but the sum instead of the maximum:
Note that $\ChildCloseness(c) \le \BestSol(c)$ for any node $c$ that has been processed, i.e.\ after line~\ref{lm:gupdate} $\ChildCloseness(c) \le \BestSol(c)$ holds, then in line~\ref{lm:neighborupdate} both are increased by 2 and after that only $\ChildCloseness(c)$ is decreased.
As $\ChildCloseness(u)$ is initially the sum of all positive $\ChildCloseness(c)$ of the children $c$ of $u$, it follows that initially $\ChildCloseness(u)$ is smaller or equal to the sum of all $\BestSol(c)$ of the children $c$ of $u$.
Therefore actually each child $c$ has passed a part of the tokens of $\BestSol(c)$ to $u$ in form of $\ChildCloseness(u)$.
Therefore also line~\ref{lm:gupdate} does not create new tokens.

This means that in total we only process $O(d)$ nodes and do amortized constant work per node as we have claimed.

\section{Details of the Algorithm proposed by Nastos and Gao}

Nastos and Gao \cite{ng-f-13} describe that in their greedy algorithm they test each possible edge addition and deletion (i.e.\ all $O(n^2)$ possibilities) in order to choose the edit that results in the largest improvement, i.e.\ the highest decrease of the number of induced $P_4$ and $C_4$.
After executing this greedy heuristic they revert the last few edits and execute the bounded search tree algorithm.
If this results in a solution with fewer edits, they repeat this last step until no improvement is possible anymore.
We chose 10 for the number of edits that are reverted.

The main question for the implementation is thus how to select the next edit.
As far as we know it is an open problem if it is possible to determine the edit that destroys most $P_4$ and $C_4$ in time $o(n^2)$.
Therefore we concentrate on the obvious approach that was also implied by Nastos and Gao: execute each possibility and see how the number of $P_4$ and $C_4$ changes.
The main ingredient is thus a fast update algorithm for this counter.

As far as we are aware the fastest update algorithm for counting node-induced $P_4$ and $C_4$ subgraphs needs amortized time $O(h^2)$ for each update where $h$ is the $h$-index of the graph \cite{egst-e-12}.
While the worst-case bound of the $h$-index is $\sqrt{m}$ it has been shown that many real-world social networks have a much lower $h$-index \cite{es-tigia-09}.
However this algorithm requires constant-time edge existence checks and stores many counts for pairs and triples of edges (though only if they are non-zero).

We implemented a different algorithm which has the same worst-case complexity if we ignore the actual value of $h$: $O(m)$.
Furthermore this algorithm is much simpler to implement and while it needs $O(n)$ additional memory during updates only the counter itself needs to be stored between updates.
The initial counting is thus possible in time $O(m^2)$, therefore this results in an $O(m^2 + k \cdot n^2 \cdot m)$ algorithm.
Note that the time needed for the initial counting is dominated by the time needed for each edit.

The main idea of the algorithm is that we examine the neighborhood structure of the edge that shall be deleted or inserted.
Using markers we note which neighbors are common or exclusive to the two incident nodes.
We iterate once over each of these three groups of neighbors and over their neighbors which needs at most $O(m)$ time.
Based on the status of the markers of these neighbors of the neighbors we can count how many times certain structures occur on the neighborhood of the edge.
Using these counts we can determine how many $P_4$ and $C_4$ were destroyed and created by editing the edge.

Apart from applying the update algorithm $m$ times there is also a simpler $O(m^2)$ counting algorithm which we used for the initialization.
Here the idea is again that we determine the common and exclusive neighborhoods for each edge.
Then we only need to iterate over the exclusive neighbors of one of the two nodes and check for each of them how many of its neighbors are exclusive to the other node.
This gives us the number of $C_4$ that edge is part of.
The product of the sizes of the exclusive neighborhoods gives us the number of $P_4$ where the edge is the central edge plus the number of $C_4$ the edge is part of.
Combining both we can get the number of $P_4$ where the edge is the central edge.
While the sum of these values already gives the number of $P_4$, the sum of the $C_4$-counts still needs to be divided by 4.
Note that when the graph is a quasi-threshold graph, i.e.\ there are not $P_4$ and $C_4$, this needs only $O(m\cdot\Delta)$ time.

\section{Generated Graphs}

Each connected component of the quasi-threshold graph was generated as reachability graph of a rooted tree.
For generating a tree, $0$ is the root and each node $v \in \{1, \dots, n-1\}$ chooses a parent in $\{0, \dots, v-1\}$.

As shown by \cite{lksf-ccscn-10} many real-world networks including social networks exhibit a community size distribution that is similar to a power law distribution.
Therefore we chose a power law sequence with 10 as minimum, $0.2 \cdot n$ as maximum and $-1$ as exponent for the component sizes and generated trees of the respective sizes.

For $k$ edits we inserted $0.8 \cdot k$ new edges and deleted $0.2 \cdot k$ old edges of the quasi-threshold graph chosen uniformly at random.
Therefore after these modifications the maximum editing distance to the original graph is $k$.
We used a more insertions than deletions as preliminary experiments on real-world networks showed that during editing much more edges are deleted than inserted.

\begin{table}
\caption{Results for the generated graphs}
\label{tab:genresults}
\centering
\begin{tabular}{l@{~}r@{~~}r@{~~}r@{~}|@{~}r@{~~}r@{~~}r@{~}|@{~}r@{~}|@{~}r@{~~}r@{~~}r}
Rand & $n$ & $b$ & I & \multicolumn{3}{c|@{~}}{Edits} & It & \multicolumn{3}{c}{Time {[}s{]}}\tabularnewline
Ed. & $m$ &  &  & 0 & 4 & $\infty$ & $\infty$ & 0 & 4 & $\infty$\tabularnewline
\midrule
\multirow{2}{*}{20} & 100 & \multirow{2}{*}{16} & $\bullet$ & 34.4 & 20.0 & 20.0 & 2.4 & 0.0 & 0.0 & 0.0\tabularnewline
 & 269 &  & $\circ$ & 36.7 & 21.4 & 21.4 & 3.9 & 0.0 & 0.0 & 0.0\tabularnewline[0.3em]
\multirow{2}{*}{400} & 100 & \multirow{2}{*}{49} & $\bullet$ & 420.5 & 352.0 & 351.9 & 3.9 & 0.0 & 0.0 & 0.0\tabularnewline
 & 497 &  & $\circ$ & 372.0 & 363.5 & 363.5 & 3.9 & 0.0 & 0.0 & 0.0\tabularnewline[0.3em]
\multirow{2}{*}{20} & 1\,000 & \multirow{2}{*}{19} & $\bullet$ & 38.0 & 19.0 & 19.0 & 2.0 & 0.0 & 0.0 & 0.0\tabularnewline
 & 4\,030 &  & $\circ$ & 166.4 & 21.7 & 21.7 & 4.1 & 0.0 & 0.0 & 0.0\tabularnewline[0.3em]
\multirow{2}{*}{400} & 1\,000 & \multirow{2}{*}{225} & $\bullet$ & 585.3 & 391.2 & 391.2 & 3.4 & 0.0 & 0.0 & 0.0\tabularnewline
 & 4\,258 &  & $\circ$ & 594.4 & 393.6 & 393.6 & 4.4 & 0.0 & 0.0 & 0.0\tabularnewline[0.3em]
\multirow{2}{*}{8K} & 1\,000 & \multirow{2}{*}{494} & $\bullet$ & 8\,268 & 7\,219 & 7\,218.5 & 5.2 & 0.0 & 0.0 & 0.0\tabularnewline
 & 8\,818 &  & $\circ$ & 7\,647 & 7\,511 & 7\,490.6 & 8.3 & 0.0 & 0.0 & 0.0\tabularnewline[0.3em]
\multirow{2}{*}{20} & 10\,000 & \multirow{2}{*}{20} & $\bullet$ & 47.8 & 20.0 & 20.0 & 2.0 & 0.0 & 0.1 & 0.1\tabularnewline
 & 66\,081 &  & $\circ$ & 1\,669 & 69.8 & 69.8 & 4.6 & 0.1 & 0.2 & 0.2\tabularnewline[0.3em]
\multirow{2}{*}{400} & 10\,000 & \multirow{2}{*}{366} & $\bullet$ & 849.3 & 390.6 & 390.6 & 3.2 & 0.0 & 0.2 & 0.2\tabularnewline
 & 66\,309 &  & $\circ$ & 2\,143 & 440.7 & 440.7 & 4.8 & 0.1 & 0.2 & 0.2\tabularnewline[0.3em]
\multirow{2}{*}{8K} & 10\,000 & \multirow{2}{*}{3\,256} & $\bullet$ & 11\,626 & 7\,902 & 7\,902 & 3.9 & 0.0 & 0.2 & 0.2\tabularnewline
 & 70\,869 &  & $\circ$ & 10\,184 & 7\,912 & 7\,911 & 5.1 & 0.1 & 0.2 & 0.3\tabularnewline[0.3em]
\multirow{2}{*}{160K} & 10\,000 & \multirow{2}{*}{4\,985} & $\bullet$ & 157\,114 & 144\,885 & 144\,880 & 5.8 & 0.0 & 0.6 & 0.8\tabularnewline
 & 162\,069 &  & $\circ$ & 150\,227 & 148\,206 & 147\,892 & 9.7 & 0.1 & 0.5 & 1.2\tabularnewline[0.3em]
\multirow{2}{*}{20} & 100\,000 & \multirow{2}{*}{20} & $\bullet$ & 64.3 & 20.0 & 20.0 & 2.0 & 0.2 & 1.7 & 1.7\tabularnewline
 & 833\,565 &  & $\circ$ & 17\,785 & 529.9 & 529.6 & 5.3 & 0.8 & 2.8 & 3.7\tabularnewline[0.3em]
\multirow{2}{*}{400} & 100\,000 & \multirow{2}{*}{384} & $\bullet$ & 1\,047 & 391.3 & 391.3 & 3.2 & 0.2 & 2.6 & 2.6\tabularnewline
 & 833\,793 &  & $\circ$ & 18\,319 & 900.0 & 899.4 & 5.5 & 0.8 & 2.9 & 3.9\tabularnewline[0.3em]
\multirow{2}{*}{8K} & 100\,000 & \multirow{2}{*}{6\,519} & $\bullet$ & 18\,550 & 7\,889 & 7\,889 & 3.4 & 0.2 & 2.8 & 2.8\tabularnewline
 & 838\,353 &  & $\circ$ & 26\,144 & 8\,381 & 8\,381 & 5.4 & 0.8 & 2.8 & 3.8\tabularnewline[0.3em]
\multirow{2}{*}{160K} & 100\,000 & \multirow{2}{*}{42\,021} & $\bullet$ & 199\,558 & 158\,021 & 158\,021 & 4.6 & 0.2 & 3.5 & 4.1\tabularnewline
 & 929\,553 &  & $\circ$ & 193\,071 & 158\,031 & 158\,025 & 6.1 & 1.0 & 3.3 & 4.9\tabularnewline[0.3em]
\multirow{2}{*}{3.2M} & 100\,000 & \multirow{2}{*}{49\,913} & $\bullet$ & 2\,728\,804 & 2\,647\,566 & 2\,647\,564 & 5.8 & 1.1 & 12.5 & 16.8\tabularnewline
 & 2\,753\,553 &  & $\circ$ & 2\,655\,538 & 2\,654\,738 & 2\,654\,736 & 5.7 & 3.0 & 11.9 & 16.9\tabularnewline[0.3em]
\multirow{2}{*}{20} & 1\,000\,000 & \multirow{2}{*}{20} & $\bullet$ & 68.9 & 20.0 & 20.0 & 2.2 & 3.6 & 32.5 & 32.3\tabularnewline
 & 10\,648\,647 &  & $\circ$ & 181\,540 & 5\,116 & 5\,111 & 6.0 & 16.4 & 54.3 & 79.7\tabularnewline[0.3em]
\multirow{2}{*}{400} & 1\,000\,000 & \multirow{2}{*}{391} & $\bullet$ & 1\,161 & 395.1 & 395.1 & 3.0 & 3.3 & 43.8 & 43.8\tabularnewline
 & 10\,648\,875 &  & $\circ$ & 182\,248 & 5\,523 & 5\,518 & 6.1 & 15.9 & 52.9 & 78.8\tabularnewline[0.3em]
\multirow{2}{*}{8K} & 1\,000\,000 & \multirow{2}{*}{7\,447} & $\bullet$ & 25\,085 & 7\,912 & 7\,912 & 3.5 & 3.4 & 50.1 & 50.0\tabularnewline
 & 10\,653\,435 &  & $\circ$ & 189\,504 & 13\,006 & 13\,001 & 6.0 & 16.6 & 53.4 & 78.0\tabularnewline[0.3em]
\multirow{2}{*}{160K} & 1\,000\,000 & \multirow{2}{*}{112\,814} & $\bullet$ & 369\,501 & 158\,808 & 158\,808 & 4.1 & 3.5 & 57.5 & 58.8\tabularnewline
 & 10\,744\,635 &  & $\circ$ & 346\,462 & 163\,337 & 163\,330 & 6.4 & 17.4 & 54.8 & 84.5\tabularnewline[0.3em]
\multirow{2}{*}{3.2M} & 1\,000\,000 & \multirow{2}{*}{476\,562} & $\bullet$ & 3\,747\,793 & 3\,163\,277 & 3\,163\,273 & 5.8 & 4.7 & 71.8 & 101.2\tabularnewline
 & 12\,568\,635 &  & $\circ$ & 3\,820\,935 & 3\,164\,175 & 3\,163\,848 & 7.5 & 26.2 & 78.2 & 134.7\tabularnewline
\end{tabular}
\end{table}

In Table~\ref{tab:genresults} we show the results for all graphs that we have generated.
The first column shows the number of random edits we performed.
As already mentioned in the experimental evaluation for all generated graphs the QTM algorithm finds a quasi-threshold graph that is at least as close as the original one.
Omitting the initialization gives much worse results for low numbers of edits and slightly worse results for higher numbers of edits.
The lower bound is relatively close to the generated and found number of edits for low numbers of edits, for very high numbers of edits it is close to its theoretical maximum, $n/2$.

All in all this shows that the QTM algorithm finds edits that are reasonable but it depends on a good initial heuristic.

\printbibliography[segment=2]

\end{appendix}
\end{refsegment}

\end{document}